\newcolumntype{.}{D{.}{.}{-1}}
\newcommand{\1}{\mathbbm{1}}
\theoremstyle{plain}
\newtheorem{thm}{\protect\theoremname}
\theoremstyle{definition}
\newtheorem{defn}[thm]{\protect\definitionname}
\theoremstyle{remark}
\newtheorem{rem}[thm]{\protect\remarkname}
\theoremstyle{plain}
\newtheorem{lem}[thm]{\protect\lemmaname}
\newenvironment{proof}[1][\protect\proofname]{\par
\normalfont\topsep6\p@\@plus6\p@\relax
\trivlist
\itemindent\parindent
\item[\hskip\labelsep
\scshape
#1]\ignorespaces
}{%
\endtrivlist\@endpefalse
}
\providecommand{\proofname}{Proof}
\theoremstyle{plain}
\newtheorem{cor}[thm]{\protect\corollaryname}
\providecommand{\corollaryname}{Corollary}
\providecommand{\definitionname}{Definition}
\providecommand{\lemmaname}{Lemma}
\providecommand{\remarkname}{Remark}
\providecommand{\theoremname}{Theorem}
\begin{document}


\title{On the performance of a cavity method based algorithm for the
Prize-Collecting Steiner Tree Problem on graphs}

\author{Indaco Biazzo}
\email{indaco.biazzo@polito.it}
\affiliation{Politecnico di Torino, Corso Duca degli Abruzzi 24, 10129 Torino,
Italy}
\author{Alfredo Braunstein}
\email{alfredo.braunstein@polito.it}
\affiliation{Politecnico di Torino, Corso Duca degli Abruzzi 24, 10129 Torino,
Italy}
\affiliation{Human Genetics Foundation, Via Nizza 52, 10023 Torino, Italy}
\affiliation{Collegio Carlo Alberto, Via Real Collegio 30, 10024 Moncalieri, Italy}
\author{Riccardo Zecchina}
\email{riccardo.zecchina@polito.it}
\affiliation{Politecnico di Torino, Corso Duca degli Abruzzi 24, 10129 Torino,
Italy}
\affiliation{Human Genetics Foundation, Via Nizza 52, 10023 Torino, Italy}
\affiliation{Collegio Carlo Alberto, Via Real Collegio 30, 10024 Moncalieri,
Italy}

\date{}

\begin{abstract}

We study the behavior of an algorithm derived from the cavity method for the 
Prize-Collecting Steiner Tree (PCST) problem on graphs. 
The algorithm is based on the zero temperature limit of the cavity equations and as such is
formally  simple (a fixed point equation resolved by iteration) and distributed (parallelizable).
We provide a detailed comparison  with state-of-the-art algorithms on a wide  range of existing 
benchmarks networks and random graphs.
Specifically, we consider  an enhanced derivative of the Goemans-Williamson
heuristics and the DHEA solver, a Branch and Cut Linear/Integer Programming
based approach. 
The comparison shows that the cavity algorithm outperforms the two  algorithms
in most large instances both  in running time and quality of the solution. 
Finally we prove a few optimality properties of the solutions provided by our algorithm, including optimality under the two post-processing procedures defined in the Goemans-Williamson derivative and global optimality in some limit cases.

\end{abstract}


\maketitle

\section{Introduction}

The cavity method developed for the study of disordered systems in statistical physics
has led in the recent years to the design of a family  of algorithmic techniques  for combinatorial optimization
known as message-passing algorithms (MPA).
In spite of the numerical evidence of great potentialities of these techniques in terms of efficiency  
and quality of results for many optimization problems, their use in real-world problems has still to be fully expressed. The main reasons for this
reside in  the fact that  the  derivation of the equations underlying the algorithms
are in many cases  non-trivial and that  the  rigorous and numerical analyses of the cavity equations  are still largely incomplete.
Both rigorous results and benchmarking would play an important role in helping the process of integrating MPAs with the existing techniques.

In what follows we focus on a very well known NP-hard optimization problem over
networks, the so-called Prize Collecting Steiner Tree problem on graphs (PCST).
The PCST problem can be stated in general terms as the problem of finding a
connected subgraph of minimum cost. It has applications in many areas ranging
from biology, e.g. finding protein associations in cell signaling
\cite{fraenkel-pcst,pnas:2011}, to network technologies, e.g. finding optimal ways to deploy fiber optic and heating
networks for households and industries \cite{Hackner:2004}.

Though the cavity equations have been developed for the study of mean field models for disordered systems, the range of their applicability  is known to go beyond these problems.

In this paper we show how MSGSTEINER  -- an algorithm derived from the zero temperature cavity equations \cite{pnas:2011} -- compares
with state-of-the-art techniques on  benchmarks problem instances. Specifically, we provide comparison results
with an enhanced derivative of the Goemans-Williamson heuristics (MGW)
\cite{GW,Johnson2000} and with the DHEA solver \cite{ivanaConf}, a Branch and
Cut Linear/Integer Programming based approach. We made the comparison both on
random networks and in known benchmarks. We show that MSGSTEINER typically 
outperforms the state-of-the-art algorithms in the largest instances of the PCST  problem both in the values of the optimum and in running time. 

Finally, we show how some aspects of the solutions can be provably
characterized. Specifically we show some optimality properties of the fixed
points of the cavity equations, including optimality under the two
post-processing procedures defined in MGW (namely \textit{Strong Pruning} and
\textit{Minimum Spanning Tree}) and global optimality of the MPA solution in
some limit cases.

\subsection{Related work}
The method and the algorithm described here are a generalization of the technique presented in ref. \cite{BBZ2008a}. In \cite{BBZ2008a} the algorithm is tested on different  families of random graphs for the more specific case of bounded depth ($D$) Steiner tree problem, which can be recovered from the PCST  problem by sending to infinity the weights of the so-called terminal nodes. 
In the cases of Erdos-Renyi random graphs and for scale-free graphs the numerical performance of the algorithm have been  shown to  be extremely good  though there exits no  rigorous results to compare with. 
Interestingly enough the case of complete  graphs with random weights allows for a comparison with rigorous asymptotic results. The scaling coefficients  of the power law for the average minimum cost and number of Steiner nodes as a function of the size $N$ of the graph was calculated exactly in ref. \cite{AFW} , where it was also rigorously established that the critical depth for the bounded-depth  Minimum Spanning Tree and Steiner Tree on random complete graphs is $D =\log_2 \log N$.  
Extensive numerical studies  up top $N=10^5$ which for brevity we do not report in detail,  show that the cavity approach provides solutions which have a minimum cost that is below that of the greedy algorithm analyzed in \cite{AFW} and that there is slow convergence to the exact scaling parameters. This fact corroborates  the conjecture that the cavity approach could be asymptotically exact and reproduce the results of  \cite{AFW}. 
While this is not totally unexpected  for statistical physics of random systems (the cavity approach is known to  be very accurate on mean-field problems defined over  complete graphs), it is important for the rigorous foundation of the cavity method itself. There exist in fact very few model problems on which the zero temperature cavity approach can be proven to be exact, one famous example being the matching problem \cite{Aldous}. NP-complete problems
(considered in their typical realizations) are particularly elusive in this respect, possibly due to the local nature of the cavity algorithms. Therefore, having at hand a non-trivial problem which can be analyzed rigorously  as in  \cite{AFW} constitutes an interesting  case also for  the rigorous understanding of the cavity method.

\section{The problem: prize collecting steiner trees}

In the following we will describe the Prize-Collecting Steiner Tree problem on
Graphs (see e.g. \cite{Johnson2000,Lucena2004}).
\begin{defn}
Given a network $G=(V,E)$ with positive (real) weights $\{c_e:e\in E\}$ on edges
and $\{b_i:i\in V\}$ on vertices, consider the problem of finding the connected
sub-graph $G'=(V',E')$ that minimizes $H(V',E')=\sum_{e\in
E'}c_{e}-\lambda\sum_{i\in V'}b_{i}$, i.e. to compute the minimum:
\begin{equation}
\min_{\begin{array}{c}
E'\subseteq E,V'\subseteq V\\
(V',E')\,\,\mbox{connected}\end{array}}\sum_{e\in E'}c_{e}-\lambda\sum_{i\in
V'}b_{i}.
\label{eq:H}
\end{equation}
\end{defn}

It can be easily seen that a minimizing sub-graph must be a tree (links closing
cycles can be removed, lowering $H$). The parameter $\lambda$ regulates the
tradeoff between the edge costs and vertices prizes, and its value has the
effect to determine the size of the subgraph $G'$: for $\lambda=0$ the empty
subgraph is optimal, whereas for $\lambda$ large enough the optimal subgraph
includes all nodes.

This problem is known to be NP-hard, implying that no polynomial algorithm exists 
that can solve any instance of the problem unless $NP=P$. To solve it we will
use a variation of a very efficient heuristics based on belief propagation
developed on \cite{BBZ2008a} that is known to be exact on some limit cases
\cite{BBZ2008a,BBZ2008b}. We will partially extend the results in
\cite{BBZ2008b} to a more general PCST setting.

\subsection{Rooted, depth bounded PCST and forests}

We will deal with a variant of the PCST called $D$-bounded rooted PCST
($D$-PCST). This problem is defined by a graph $G$, an edge cost matrix $c$ and
prize vector $b$ along with a selected ``root'' node $r$. The goal is to find
the $r$-rooted tree with maximum depth $D$ of minimum cost, where the cost is
defined as in (\ref{eq:H}). A general PCST can be reduced to $D$-bounded rooted
PCST by setting $D=|V|$ and probing with all possible rootings, slowing the
computation by a factor $|V|$ (we will see later a more efficient way of doing
it). A second variant which we will consider is the so-called  $R$ multi-rooted $D$-bounded Prize Collecting Steiner Forest
($(R,D)$-PCSF). It consists of is a natural generalization of the previous problem: a
subset $R$ of ``root'' vertices is selected, and the scope is to find a forest
of trees of minimum cost, each one rooted in one of the preselected root nodes
in $R$.

\subsection{Local constraints}

The cavity formalism can be adopted and made efficient if the global constraints  which may be present in the problem can be written 
in terms of local constraints. In the  PCST case the global constraint is connectivity which can be made local as follows.

We start with the graph $G=\left(V,E\right)$ and a selected \emph{root} node
$r\in V$. To each vertex $i\in V$ there is an associated couple
of variables $\left(p_{i},d_{i}\right)$ where $p_{i}\in \partial i\cup\left\{
*\right\}$, $\partial i=\{j:(ij)\in E\}$ denotes the set of neighbors of $i$ in
$G$ and $d_{i}\in\left\{ 1,\dots,D\right\} $. Variable $p_{i}$ has the
meaning of the parent of $i$ in the tree (the special value $p_i=*$ means that
$i\notin V'$), and $d_{i}$ is the auxiliary  variable describing  its distance
to the root node (i.e. the \emph{depth} of $i$). To correctly describe a tree,
variables $p_i$ and $d_i$ should satisfy a number of constrains, ensuring that
depth decreases along the tree in
direction to the root, i.e. $p_{i}=j\Rightarrow d_{i}=d_{j}+1$. Additionally,
nodes that do not participate to the tree ($p_i=*$) should not be parent of some
other node, i.e. $p_i=j\Rightarrow p_j\neq*$. 
Note
that even though $d_{i}$ variables are redundant (in the sense that
they can be easily computed from $p_{j}$ ones), they are crucial to maintain
the locality of the constraints. For every ordered couple $i,j$ such that
$(ij)\in E$, we define
$f_{ij}\left(p_{i},d_{i},p_{j},d_{j}\right)=\1_{p_{i}=j\Rightarrow
d_{i}=d_{j}+1\wedge p_j\neq
*}=1-\delta_{p_{i},j}\left(1-\delta_{d_{i},d_{j}+1}(1-\delta_{p_j,*})\right)$
(here $\delta$ is the Kroenecker delta). The condition of the subgraph to be a
tree can be ensured by imposing that 
$g_{ij}=f_{ij} f_{ji}$ has to be equal to one for each edge $\left(ij\right)\in
E$. 
If we extend the definition of $c_{ij}$ by $c_{i*}=\lambda b_i$, then (except
for an irrelevant constant additive term), the minimum in (\ref{eq:H}) equals
to:
\begin{equation}
\min{\left\{ \mathcal{H}(\mathbf{p}): (\mathbf{d},\mathbf{p}) \in {\mathcal
T}\right\}},
\end{equation}
where $\mathbf{d}=\{d_i\}_{i\in V}$, $\mathbf{p}=\{p_i\}_{i\in V}$, $\mathcal T
= \{ (\mathbf{d},\mathbf{p}):g_{ij}(p_i,d_i,p_j,d_j)=1\,\forall (ij)\in E )$
and 
\begin{equation}
\mathcal{H}(\mathbf{p}) \equiv \sum_{i\in V}c_{ip_i}.
\end{equation}

This new expression for the energy accounts for the sum of taken edge costs plus
the sum of uncollected prizes and has the advantage of being non-negative.

\section{Derivation of the message-passing cavity equations}

The algorithmic scheme we propose originates from the cavity method of
statistical physics, a technique  which is known in other fields under 
different names, namely Cavity equations, Belief Propagation (BP), Max-Sum or
Sum-Product equations (MS).  From a numerical point of view, message-passing
algorithms are distributed algorithm which allow  for a very fast
resolution of inference and optimization problems \cite{Mezard:2002lo}, 
even for large networks. 
A recent review can be found in \cite{mezard-montanari}. The starting point for the equations  is
the Boltzmann-Gibbs distribution:
\begin{equation}
P(\mathbf{d},\mathbf{p})= \frac{\exp (-\beta \mathcal{H}(\mathbf{p}))}{Z_\beta},
\end{equation}
where $(\mathbf{d},\mathbf{p})\in \mathcal T$, $\beta$ is a positive parameter
(called inverse temperature), and $Z_\beta$ is a normalization constant (called
partition function). In the limit $\beta \to \infty$ this probability
concentrates on the configurations which minimize $\mathcal{H}$. 
The BP approximation consists in a weak correlation assumption between certain
probability distributions of single $(p_i,d_i)$ pairs called ``cavity marginals''. Given
$i,j\in V$, the cavity marginal $P_{j   i} \left(d_{j},p_{j}\right)$ is defined
as the marginal distribution 
$\sum_{(d_k,p_k)_{k\in V\setminus\{j,i\}}} P_{G^{(i)}}(\mathbf{d}, \mathbf{p})$ on
a graph $G^{(i)}$ from which node $i$ has been temporally removed. The BP
equations are derived by assuming that the  cavity marginals are uncorrelated and as such satisfy the following  closed set of
equations (see e.g.~\cite{mezard-montanari} for a general discussion):

\begin{eqnarray}
{P}_{j  i}\left(d_{j},p_{j}\right) & \propto & e^{-\beta
c_{jp_{j}}}\prod_{k\in\partial j\setminus i}Q_{k 
j}\left(d_{j},p_{j}\right)\label{eq:phat}\\
Q_{k  j}\left(d_{j},p_{j}\right) & \propto & \sum_{d_{k}}\sum_{p_{k}}P_{k 
j}\left(d_{k},p_{k}\right)g_{jk}\left(d_{k},p_{k},d_{j},p_{j}\right).
\label{eq:bp}
\end{eqnarray}

\begin{figure}
 \begin{center}
\includegraphics[width=0.4\textwidth]{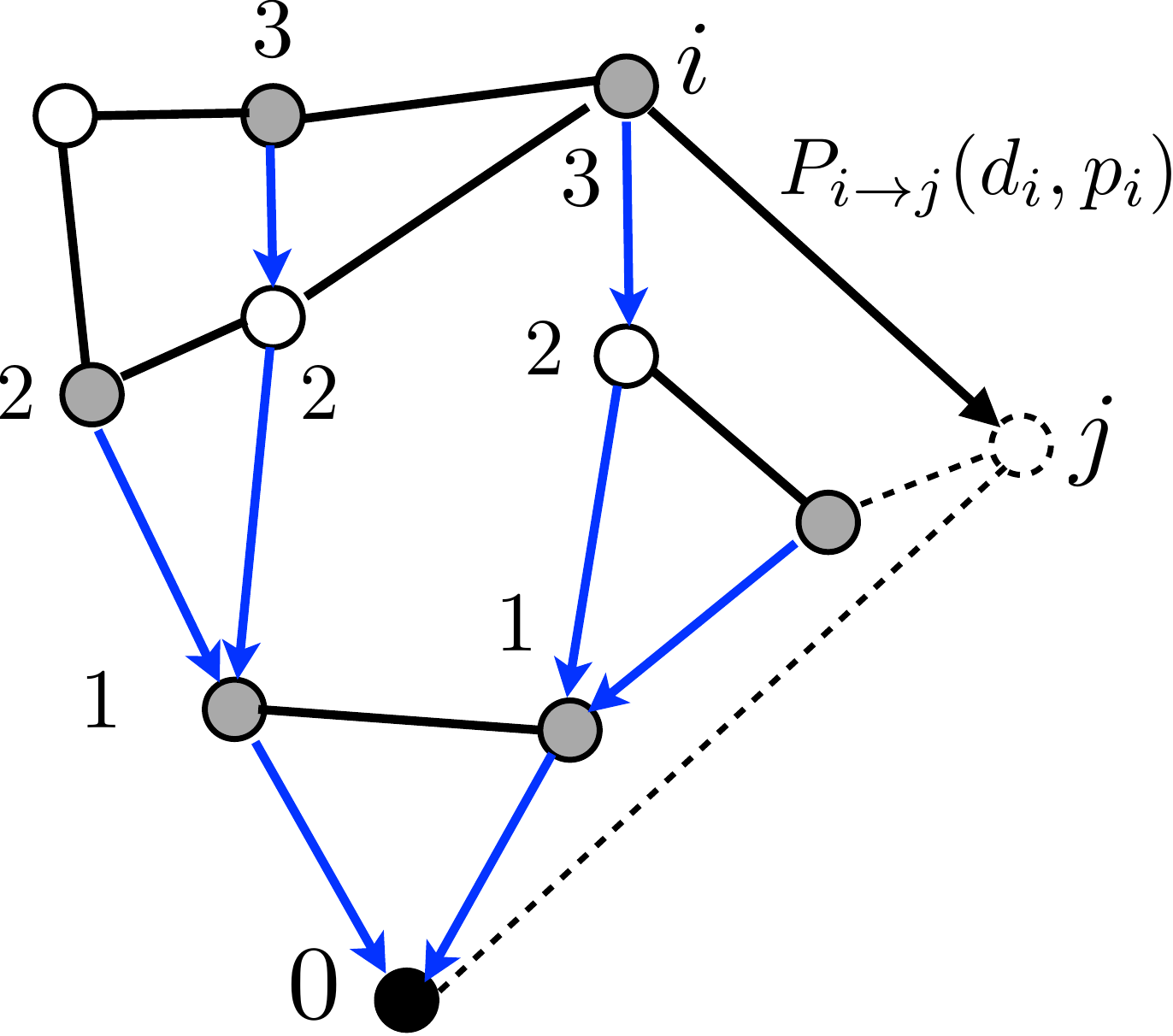}
  \caption{A schematic representation of the Prize Collecting Steiner Tree problem and its local representation.
  Numbers next to the nodes are the distances (depths) from the root node (black node). The prize value is proportional to the darkness of the nodes.  
  Arrows are the pointers from node  to node. Distances and pointers are used to define the  connectivity constraints
  which appear in the message-passing equations.|
  Blue arrows represent a potential solution.}
  \label{fig_stener}
 \end{center}
 \end{figure}

This assumption is correct if $G$ is a tree, in which case  
(\ref{eq:phat})-(\ref{eq:bp}) are exact and have a unique solution.
Equations (\ref{eq:phat})-(\ref{eq:bp}) can be seen as fixed point equations,
and solutions are normally searched through iteration: substituting \eqref{eq:bp} onto \ref{eq:phat} and giving a time index $t+1$
and $t$ to the cavity marginals in respectively the left and right hand side of the resulting equation, this system is iterated until numerical
convergence is reached. Cavity marginals are often called ``messages'' because they can be
thought of as bits of information that flow between edges of the graph during
time in this iteration. On a fixed point, the BP approximation to the marginal is
computed as
\begin{eqnarray}
{P}_{j}\left(d_{j},p_{j}\right) & \propto & e^{-\beta c_{jp_{j}}}\prod_{k\in\partial j}Q_{kj}\left(d_{j},p_{j}\right)\label{eq:bp-marginas}.
\end{eqnarray}
\subsection{Max-sum: $\beta \to \infty$ limit}

In order to take the $\beta \to \infty$ limit, (\ref{eq:bp}) can be rewritten in
terms of ``cavity fields''
\begin{eqnarray}
\psi_{j  i}\left(d_{j},p_{j}\right) & = & \beta^{-1}\log P_{j 
i}\left(d_{j},p_{j}\right)\\
\phi_{k  j}\left(d_{j},p_{j}\right) & = & \beta^{-1}\log Q_{k 
j}\left(d_{j},p_{j}\right).
\end{eqnarray}
The BP equations  take the so-called MS form:
\begin{eqnarray}
\label{eq:ms1}
\psi_{j  i}\left(d_{j},p_{j}\right) & = & -c_{jp_{j}}+\sum_{k\in\partial
j\setminus i}\phi_{k  j}\left(d_{j},p_{j}\right)+C_{ji}\label{eq:psi}\\
\label{eq:ms2}
\phi_{k  j}\left(d_{j},p_{j}\right) & = &
\max_{p_{k},d_{k}:g_{jk}\left(d_{k},p_{k},d_{j},p_{j}\right)=1}\psi_{k 
j}\left(d_{k},p_{k}\right)\label{eq:max-sum},
\end{eqnarray}
where $C_{ji}$ is an additive constant chosen to ensure $\max_{d_j,p_j}\psi_{j 
i}\left(d_{j},p_{j}\right)=0$
 
Computing the right side of (\ref{eq:max-sum}) is in general too costly in
computational terms. Fortunately, the computation can be carried out efficiently by
breaking up the set over which the max is computed into smaller (possibly
overlapping) subsets. We define
\begin{eqnarray}
A_{k  j}^{d} & = & \max_{p_{k}\neq j,*}\psi_{k 
j}\left(d,p_{k}\right)\label{eq:A}\\
B_{k  j}^{d} & = & \psi_{k  j}\left(d,*\right)\\
C_{k  j}^{d} & = & \psi_{k  j}\left(d,j\right)\label{eq:C}.
\end{eqnarray}

Equation (\ref{eq:max-sum}) can now be rewritten as: 
\begin{eqnarray}
A_{j  i}^{d} & = & \sum_{k\in\partial  j\setminus i}E_{k 
j}^{d}+\max_{k\in\partial i\setminus j}\left\{ -c_{jk}-E_{k  j}^{d}+A_{k 
j}^{d-1}\right\} \label{eq:AFP}\\
B_{j  i} & = & -c_{j*}+\sum_{k\in\partial j\setminus i}D_{k  j}\\
C_{j  i}^{d} & = & -c_{ji}+\sum_{k\in\partial j\setminus i}E_{k  j}^{d}\\
D_{j  i} & = & \max\left(\max_{d}A_{j  i}^{d},B_{j  i}\right)\\
E_{j  i}^{d} & = & \max\left(C_{j  i}^{d+1},D_{j  i}\right)\label{eq:EFP}.
\end{eqnarray}

Using some simple efficiency tricks including computing $\sum_{k\in\partial
j\setminus i}E_{kj}^{d}$ as $\sum_{k\in\partial j}E_{kj}^{d} - E_{ki}^{d}$, the
computation of the right side of (\ref{eq:AFP})-(\ref{eq:EFP}) for all $i\in
\partial j$ can be done in a  time proportional to $D|\partial j|$, where $D$
is the depth bound. The overall computation time is then $O(|E| D)$ per
iteration.

\subsection{Total fields}

In order to identify the minimum cost configurations, we need to compute the
total marginals, i.e. the marginals in the case in which  no node has been
removed from the graph. Given cavity fields, the total fields
$\psi_{j}\left(d_{j},p_{j}\right)=\lim_{\beta\to\infty}\beta^{-1}\log
P_{j}\left(d_{j},p_{j}\right)$ can be written as:
\begin{eqnarray}
\psi_{j}\left(d_{j},p_{j}\right) & = & -c_{jp_{j}}+\sum_{k\in\partial j}\phi_{k 
j}\left(d_{j},p_{j}\right) + C_j\label{eq:marginal},
\end{eqnarray}
where $C_j$ is again an additive constant that ensures 
$\max_{d_{j},p_{j}}\psi_{j}\left(d_{j},p_{j}\right)=0$.
In terms of the above quantities we find  $\psi_{j}\left(d_{j},i\right)=F_{j 
i}^{d}\overset{def}{=}\sum_{k\in\partial j}E_{k  j}^{d}+\left(-c_{ij}-E_{j 
i}^{d}+A_{j  i}^{d-1}\right)$
if $i\in\partial j$ and
$\psi_{j}\left(d_{j},*\right)=G_{j}\overset{def}{=}-c_{j*}+\sum_{k\in\partial
j}D_{k  j}$.
The total fields can be interpreted as (the Max-Sum approximation to)
the relative negative energy loss of chosing a
given configuration for variables $p_j,d_j$ instead of their optimal choice, 
i.e. $\psi_{j}\left(d_{j},p_{j}\right) = \min{\left\{ \mathcal{H}(\mathbf{p}'):
(\mathbf{d}',\mathbf{p}') \in {\mathcal T}\right\}}-\min{\left\{
\mathcal{H}(\mathbf{p}'): (\mathbf{d}',\mathbf{p}') \in {\mathcal T},
d_j=d'_j,p_j=p'_j\right\}}$. In particular, in absence of degeneracy, the maximum of the field is attained for  values of $p_j, d_j$ corresponding to the optimal energy.

\subsection{Iterative dynamics and reinforcement}

Equations (\ref{eq:AFP})-(\ref{eq:EFP}) can be thought as a fixed-point equation
in a high dimensional euclidean space. This equation could be solved by repeated
iteration of the quantities $A$,$B$, and $C$ starting from an arbitrary initial
condition, simply by adding an index $(t+1)$ to $A,B,C$ in the left-hand side of
(\ref{eq:max-sum}) and index $(t)$ to all other instances of $A,B,C,D,E$.


This system converges in many cases. When it does not converge,
a technique called \emph{reinforcement} is of help \cite{BrZe2006b}. 
The idea is to
perturbate the right side of (\ref{eq:psi}) and (\ref{eq:marginal})
by adding the term $\gamma_{t}\psi_{j}^{t}\left(d_{j},p_{j}\right)$
for a (generally small) scalar factor $\gamma_{t}$. The resulting
equations become:

\begin{eqnarray}
A_{j  i}^{d}\left(t+1\right) & = & \sum_{k\in \partial j\setminus i}E_{k 
j}^{d}\left(t\right)+\\
& & + \max_{k\in \partial j\setminus i}\left\{ -c_{jk}-E_{k 
j}^{d}\left(t\right)+A_{k  j}^{d-1}\left(t\right)+\gamma_{t}F_{j 
k}^{d}\left(t\right)\right\} \\
B_{j  i}\left(t+1\right) & = & -c_{j*}+\sum_{k\in \partial j\setminus i}D_{k 
j}\left(t\right)+\gamma_{t}G_{j}\left(t\right)\\
C_{j  i}^{d}\left(t+1\right) & = & -c_{ji}+\sum_{k\in \partial j\setminus i}E_{k
 j}^{d}\left(t\right)+\gamma_{t}F_{j  i}^{d}\left(t\right)\\
D_{j  i}\left(t\right) & = & \max\left\{\max_{d}A_{j  i}^{d}\left(t\right),B_{j 
i}\left(t\right)\right\}\\
E_{j  i}^{d}\left(t\right) & = & \max\left\{C_{j  i}^{d+1}\left(t\right),D_{j 
i}\left(t\right)\right\}\\
G_{j}\left(t+1\right) & = & -c_{j*}+\sum_{k\in \partial j}D_{k 
j}\left(t\right)+\gamma_{t}G_{j}\left(t\right)\\
F_{j  i}^{d}\left(t+1\right) & = & \sum_{k\in \partial j}E_{k 
j}^{d}\left(t\right)+\left(-c_{ji}-E_{i  j}^{d}\left(t\right)+A_{i 
j}^{d-1}\left(t\right)\right)+\\
& & +\gamma_{t}F_{j  i}^{d}\left(t\right).
\end{eqnarray}

In our experiments, the equations converge for a sufficiently 
large $\gamma_t$. The strategy we adopted is, when the equations do 
not converge, to start with $\gamma_t=0$ 
and slowly increase it until convergence in a linear regime
$\gamma_t = t\rho$ (although other regimes are possible). The number of 
iterations is then found to be inversely dependent on the parameter $\rho$. This strategy 
could be interpreted as using time-averages of the MS marginals when the equations
do not converge to gradually bootstrap the system into an (easier to solve) system 
with sufficiently large external fields. A C++ implementation of these 
equations can be found (in source form) on \cite{cmpwebsite}. Note that the cost matrix $\left(c_{ij}\right)$ need not to be symmetric, and the same scheme could be used for directed graphs (using $c_{ji}=\infty$ if $\left(i,j\right)\in E$ but $\left(j,i\right)\notin E$). 

\subsection{Root choice}

The PCST formulation given in the introduction is unrooted. The MS equations on
the other hand, need a predefined root. One way of reducing the unrooted problem
to a rooted problem is to solve $N=|V|$ different problems with all possible
different rooting, and choose the one of minimum cost. This unfortunately adds a
factor $N$ to the time complexity. Note that in the particular case in which
some vertex has a large enough prize to be necessarily included in an optimal
solution (e.g. $\lambda b_i > \sum_{e\in E} c_e$), this node can simply be
chosen as as root. 

We have devised a more efficient method for choosing the root in the general
case, which we will now describe. Add an extra new node $r$ to the graph,
connected to every other node with identical edge cost $\mu$. If $\mu$ is
sufficiently large, the best energy solution is the (trivial) tree consisting in
just the node $r$. Fortunately, a solution of the MS equations on this graph
gives additional information: for each node $j$ in the original graph, the marginal 
field $\psi_j$ gives the relative energy shift of selecting a given parent 
(and then adjusting all other variables in the best possible configuration). 
Now for each $j$, consider the positive real value $\alpha_j = -\psi_j(1, r)$, that corresponds with the best attainable energy, constrained to the condition that $r$ is the parent of $j$.
If $\mu$ is large enough, this energy is the energy of a tree in which only $j$ 
(and no other node) is connected to $r$ (as each of these connections costs $\mu$).
But these trees are in one to one correspondence with trees rooted at $j$ in the 
original graph. The smallest $\alpha_j$ will thus identify an optimal rooting.

Unfortunately the information carried by these fields is not sufficient to build the optimal 
tree. Therefore  one needs to select the best root $j$ and run the MS equations 
a second time on the original graph using this choice.

\subsection{Comparision with other techniques}

We compared the performance of MSGSTEINER with the three different algorithms: two that employ
an integer linear programming strategy to find an optimal subtree, namely the Lagrangian
Non Delayed Relax and Cut (LNDRC) \cite{Cuna} and branch-and-cut (DHEA)
\cite{ivanaConf}, and modified version of the Goemans and Williamson algorithm (MGW)\cite{Johnson2000}.

\subsubsection{Integer Linear programming}
 
The goal of the Integer linear programming (ILP) is to find an integer solution
vector $x^*\in \mathbb{Z}^n$ such that:
 \begin{equation}
c^T x^*= \min \{c^T x^*\: | \: Ax\geq b,\: x\in   \mathbb{Z}^n\},
\label{eq:ilp}
\end{equation}
where a matrix $A\in \mathbb{R}^{m*n} $ and vector $b \in \mathbb{R}^m$ and $c
\in \mathbb{R}^n$ are given. Many graph problems can be formulated as an integer
linear programming problem \cite{Aardal:1996}. In general, solving
(\ref{eq:ilp}) with $x^*\in\mathbb{Z}$ is NP-Complete. The standard approach
consists in solving (\ref{eq:ilp}) for $x^* \in \mathbb R$ (a relaxation of the
original problem) and use the solution as a guide for some heuristics or
complete algorithm for the integer case. The relaxed problem can be solved by
many classical algorithms, like the Simplex Method \cite{dantziglinear:2003} or
Interior Point methods.
In order to map the PCST problem in a ILP problem we introduce a variable vector
$z \in \{ 0,\, 1\}^{E}$ and $y \in \{ 0,\, 1\}^{V}$ where the component for an
edge in $E$ or for a vertex in $V$ is one if and only if it is included in the
solution, zero otherwise. Now (\ref{eq:H}) can be written as 
\begin{equation}
H= \sum_{e \in E} c_e z_e - \sum_{i \in V} b_i y_i \:,
\label{eq:ilp_cost}
\end{equation} 
and the constraints $Ax\geq b$ in (\ref{eq:ilp}) generally involve all the
variable $z$ and $y$ and describe the problem. For the PCST, and in general for
hard problems, the number of constraints grows
exponentially with the problem size\cite{Aardal:1996}. DHEA and LNRDC use
different techniques to tackle the problems of enormous number of constraints.
Both programs are able in principle to prove the optimality of solution, and
when is not the case they are able to give a lower bound for the value of the
optimum.

\subsubsection{Goemans-Williamson}
The MGW algorithm is based on the primal-dual method for approximation
algorithms \cite{GW}. The starting point is still the ILP formulation of the
problem (\ref{eq:ilp}), but it employs a controlled approximation scheme that
enforces the cost of any solution to be at most twice as large as the optimum
one. In addition, MGW implements two different post-processing strategies,
namely a pruning scheme that is able to eliminate some nodes while lowering the
cost, and the computation of the minimum spanning tree in order to find an
optimal rewiring of the same set of nodes. The overall running time is
$\mathcal{O}{(n^2\log n)}$.  A complete description is available in \cite{GW}.

\section{Computational Experiments}

\subsection{Instances}
Experiments were performed on several classes of instances: 

\begin{itemize}
\item  C, D and E available at \cite{IvanaLjubicSite} and derived from the
Steiner problem instances of the OR-Library \cite{OR_Library}. This set of 120
instances was previously used as benchmark for algorithms for the PCST\cite{OR_Library}. 
The solutions of these instances were obtained with the algorithms\cite{ivanaConf, Cuna}. The
class C, D, E have respectively $500$, $1000$, $2000$ node and are generated at
random, with average vertex degree is either $2.5$, $4$, $10$ or $50$. Every edge cost
is a random integer in the interval $[1,\,10]$. There are either  $5$, $10$,
$n/6$, $n/4$ or $n/2$ vertices with prizes different from zero and random
integer generated in the interval $[1,\, maxprize]$ where $maxprize$ is either
$10$ or $100$. Thus, each of the classes C, D, E consists of 40 graphs.  

\item K and P available at \cite{IvanaLjubicSite}. These instances are provided
in \cite{Johnson2000}. In the first group instances are unstructured. The second
group includes random geometric instances designed to have a structure somewhat
similar to street maps. Also the solution of these instances were found with the
algorithms \cite{ivanaConf, Cuna}.

\item H are the so-called hypercubes instances proposed in \cite{Rossetti}. Sets
of artificially generated and very difficult instances for the Steiner tree
problem. Graphs  are d-dimensional hypercubes with $d \in {6, . . . , 12}$. For
each value of $d$, the corresponding graph has $2^d$ vertices and $d \cdot 2^{d
- 1}$ edges. We used the prized version of these instances defined in \cite{ivanaConf}. 
For almost all instances in this class the optimum is unknown.

\item i640 are the so-called incidence instances proposed in \cite{DuinVob} for
the Minimum Steiner Tree problem. These instances have $640$ nodes and only the nodes in
a subset $K\subseteq V$ have prizes different from zero (in the original problem
these were terminals). The weight on each edge $(i,\, j)$ is defined with a
sample $r$ from a normal distribution, rounded to an integer value with a
minimum outcome of $1$ and maximum outcome of $500$, i.e., $c_{ij} = \min \{
\max \{ 1,\, round(r) \},\, 500 \} $. However, to obtain a graph that is much
harder to reduce by preprocessing techniques three distributions with a
different mean value are used. Any edge $(i, j)$ is incident to none, to one, or
to two vertices in subset $K$. The mean of r is $100$ for edges $(i, j)$ with
$i,\, j \notin K$, $200$ on edges with one end vertex in $K$, and 300 on edges
with both ends in $K$. Standard deviation for each of the three normal
distributions is $5$. In order to have prizes also on vertices we extracted
uniformly from all integer in the interval between $0$ and $4*max_{edge}$ where
$max_{edge}$ is the maximum value of edges in the samples considered.  There are
20 variants combining four different number of vertices in $K$ (rounding to the
integer value $[.]$): $|k| = [\log_2{|V|} ],\, [ \surd{|V|} ],\, [2\surd{|V|}]$,
and $[ |V| /4]$ with five edge number: $|E|= [3|V|/2],\, 2|V|,\, [ |V| \log{|V|}
],\, [2|V| \log{|V|} ]$, and $[ |V| ( |V| - 1)/4]$. Each variant is drawn five
times, giving 100 instances.

\item Class R. The last class of samples are $G(n,p)$ random graphs with $n$ vertices and
independent edge probability $p= (2 \nu)/(n-1)$.
The parameter $\nu$ is the average node degree, that was chosen as $\nu=8$. 
The weight on each edge $(i,\, j)$ can take three different
value, $1,\,2$ and $4$, with equal probability $\frac{1}{3}$. Node prizes were
extracted uniformly in the interval $[ 0,\, 1 ]$. We generated different
graphs with four different values of $\lambda$ $( \lambda =
1.2,\, 1.5, \, 2 $ or $ 3 ) $, see (\ref{eq:H}),  in order to explore different regimes of solution 
sizes.
We find that the average number of nodes that belong to the solution for $ \lambda = 1.2,\,
1.5, \, 2 $ and $ 3 $ are respectively about $ 14 \% , \,  33 \% , \, 51 \% , \,
67 \% $ of the total nodes in the graph. We have created twelve instances of different
size for the four class of random graph, from $n=200$ up to $n=4000$ nodes.
For each parameter set we generated ten different realizations.
The total number of samples is 480.

\end{itemize}
The MSGSTEINER algorithm was implemented in C++ and run on a single core of an
AMD Opteron Processor 6172, 2.1GHz, 8 Gb of RAM, with Linux, g++ compiler, -O3
flag activated. A C++ implementation of these equations can be found in source
form on \cite{cmpwebsite}. The executable of DHEA is available in
\cite{IvanaLjubicSite}, and in order to compare the running time we ran DHEA 
and MSGSTEINER on the same workstation. The executable of LNDRC and
MGW programs was not available. We implemented the non-rooted version of MGW to compare only the
optimum on the random graph instances.

\subsection{Results}

We analyzed two numeric quantities: the time to find the solution, and the gap
between the cost of the solution and the best known lower bound (or the optimum
solution when available) typically found with programs based on linear
programming. The gap is defined as $gap = 100 \cdot \frac{Cost-LowerBound}{lowerBound}$.

In Table (\ref{table:resCDE}) we show the comparison between MSGSTEINER and the
DHEA program. DHEA is able to solve exactly K, P and C, D, E instances. 
The worst performance of
MSGSTEINER is on the K class, where the average gap is about 2.5\%. In this
class the average solution is very small as it comprises only about 4.4\% of
total nodes of the graph. MSGSTEINER seems to have most difficulty with small
subgraphs. MSGSTEINER is able to find solutions very close to the optimum for
the P class, that should be model a street network. MSGSTEINER is also able to 
find solutions very close to the optimum, with a gap inferior to $0.025\%$ on the C, D, and E classes. 

In Figure (\ref{fig_gap}) we show the gap of MSGSTEINER and MGW from the optimum
values found by the DHEA program in the class R. MSGSTEINER gaps are almost negligible (always under $0.05\%$) and tend to zero when the size grows. MGW gaps instead are always over $1\%$. For intermediate size of solutions trees the gaps of MGW are over $3\%$.

\begin{figure}
 \begin{center}
\includegraphics[width=1.0\textwidth]{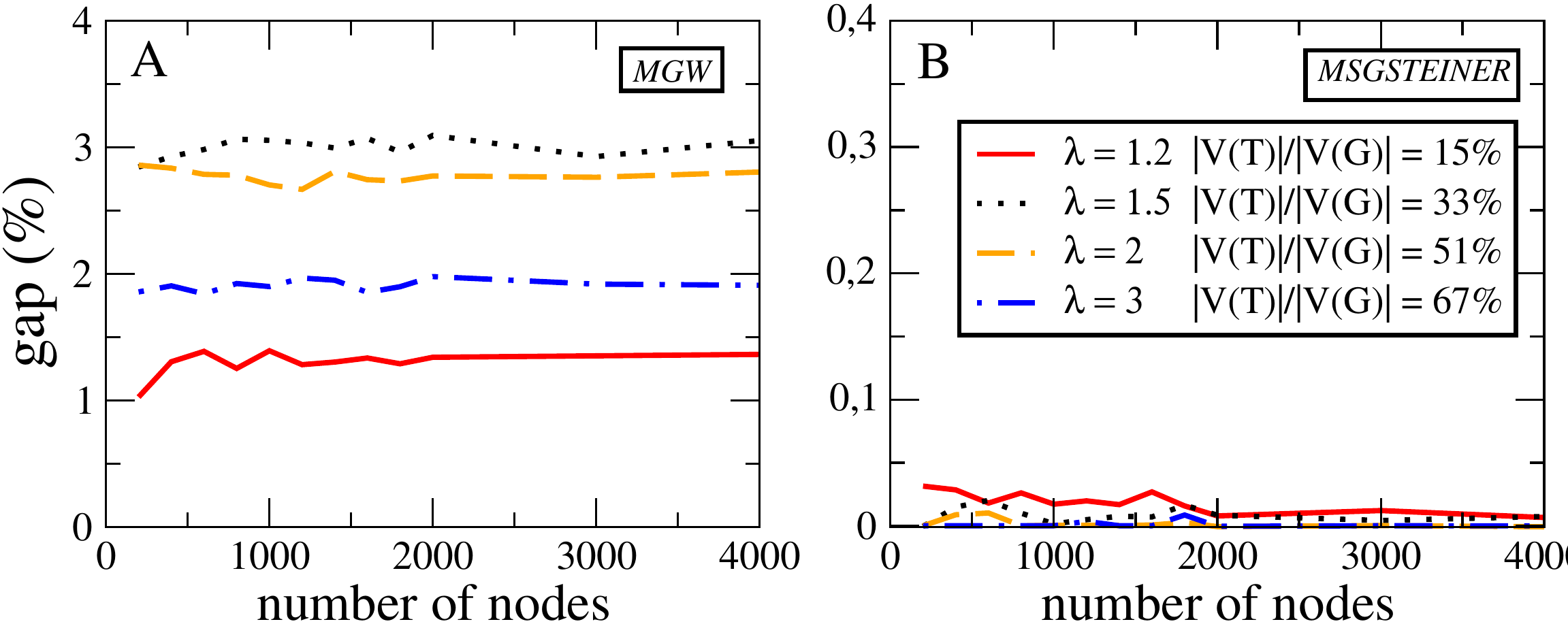}
  \caption{Plot of the Gap of the MSGSTEINER and MGW from the optimum found by
DHEA program. MSGSTEINER gaps are always under $0.05\%$. 
MGW gaps are always over $1\%$ and for intermediate sizes of the solution tree the gaps of MGW are over $3\%$. 
}\label{fig_gap}
 \end{center}
 \end{figure}

In Figure (\ref{fig_time}) we show the running time for the class R, with increasing solution tree size. In general we observe that the running time of MSGSTEINER grows much slower than the one of DHEA for increasing number of nodes in the graph and MSGSTEINER largely outperforms DHEA in computation time for large instances; furthermore the differences between the algorithms became specially and large for large expected tree solution. In at least one case DHEA could not find the optimum solution whithin the required maximum time and the MSGSTEINER solution was slightly better.

\begin{figure}
 \begin{center}
\includegraphics[width=1.0\textwidth]{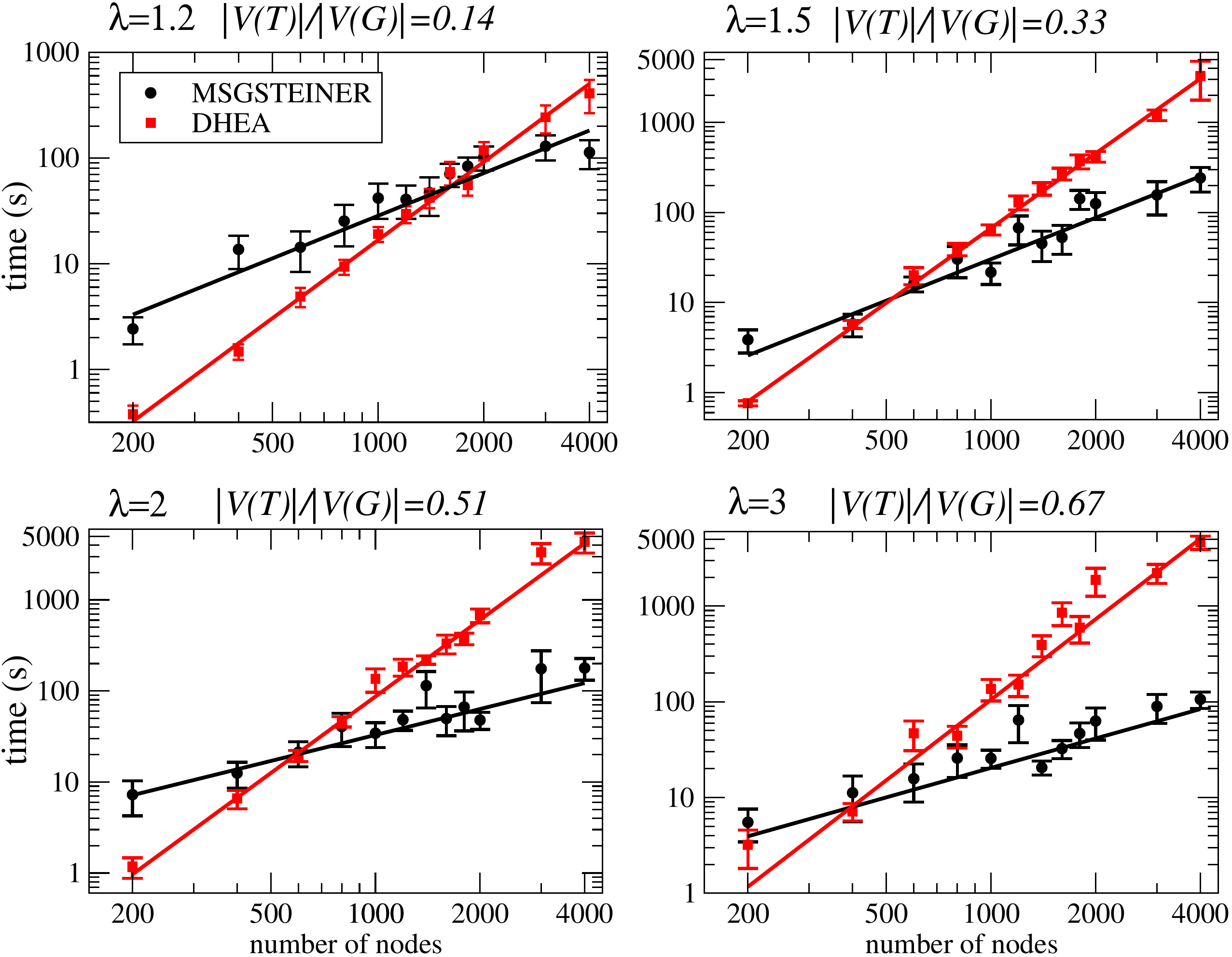}
  \caption{Result on the random graphs class R. Points correspond to the running time of MSGSTEINER and DHEA versus graph size. The four cases show how running time behavior depends on size of the expected solution tree. The quantities shown in figure are averaged over ten different realizations. Data are fitted with function 
$y = a x^{b}$. The $b$ values found for DHEA are respectively (clockwise from top-left): 2.4, 2.8, 2.8, 2.8. BP performance is as expected roughly linear in the number of vertices. The fitted $b$ parameters are (clockwise from top-left): 1.5, 1.3, 1, 1. For instances that are large enough, the running time of MSGSTEINER is smaller than the one of DHEA and the difference increases with the expected solution tree.\label{fig_time}}
 \end{center}
 \end{figure}

The class i640 consists in graphs with varying number of edges and nodes, and a
varying number of nodes with non-zero prize. We define $K$ as the subset of
nodes with non-zero prize. Table (\ref{table:resi640}) shows, for each type of
graph, the average time and the average gap on five different realizations of
the graphs for MSGSTEINER and DHEA algorithms. We set the time limit to find a
solution of DHEA to $2000$ seconds. We observe that DHEA obtains good
performance in terms of the optimality of the solution when the size of subset
$K$ is small. MSGSTEINER finds better result than DHEA when the size of $K$ is
sufficiently large, within a time of one or two order of magnitude smaller.
Moreover DHEA seems to have difficulty to find reasonable good solution when the graph
have high connectivity.

We show in Table (\ref{table:resH}) a comparison between MSGSTEINER, LNDRC
\cite{Cuna} and DHEA. The results and running time of  LNDRC are taken from
\cite{Cuna}. The computer reportedly used for the optimization is comparable
with ours. We have imposed to DHEA a time limit of 6000 seconds and we show two
results of MSGSTEINER with different values of the reinforcement parameter. The
lower bound is taken from \cite{Cuna}. In almost all instance MSGSTEINER
obtains better results, both in time and in quality of solution. The difference
is accentuated  for large instances. As expected, decreasing the reinforcement
parameter allows to find lower costs at the expense of larger computation
times.

\begin{table}
\centering  
\begin{tabular}{r.....} 
\hline                        
Group & \multicolumn{1}{c}{MS gap} & \multicolumn{1}{c}{MS time (s)} &
\multicolumn{1}{c}{DHEA gap} &\multicolumn{1}{c}{DHEA time (s)} & 
\multicolumn{1}{c}{Size Sol} \\  
\hline                  
K & 2.62 \%& 6.51  & 0.0 \%& 127.97 &  4.4 \% \\ 
P & 0.46 \%& 2.31  & 0.0 \%& 0.18 & 31.4 \% \\
C & 0.006 \%& 16.24 & 0.0 \%& 2.30  & 20.2 \% \\
D & 0.005 \%& 35.06 & 0.0 \%& 16.12 & 20.2 \% \\
E & 0.024 \%& 305.49 & 0.0 \%& 1296.11 & 26.4 \% \\ 
\hline 
\end{tabular}
\caption{Results class KPCDE} 
\label{table:resCDE} 
\end{table}

\begin{table}
\centering
\begin{tabular}{r.....}
\hline                       
Name & \multicolumn{1}{c}{time MS} & \multicolumn{1}{c}{time DHEA} &
\multicolumn{1}{c}{gap MS (\%)} & \multicolumn{1}{c}{gap DHEA (\%)} \\
\hline
0-0 & 0.8 & 0.2 & 1.3 & 0 \\
0-1 & 2.5 & 4.2 & 1.0 & 0 \\
0-2 & 100.8 & 226.6 & 1.4 & 0 \\
0-3 & 1.2 & 0.3 & 0.05 & 0 \\
0-4 & 37.3 & 72.8 & 1.8  & 0 \\
1-0 & 1.0 & 0.85 & 0.3 & 0 \\
1-1 & 2.6 & 1060.1 & 1.2 & 1.5  \\
1-2 & 90.6 & 1133.8 & 0.7 & 0.2 \\
1-3 & 1.5 & 3.8 & 0.8 & 0 \\
1-4 & 33.7 & 2000.0 & 1.8 & 7.8 \\
2-0 & 0.8 & 0.7 & 0.1 & 0 \\
2-1 & 4.3 & 2000.0 & 2.2 & 11.6 \\
2-2 & 149.7 & 2011.1 & 0.8 & 14.8 \\
2-3 & 1.2 & 12.0 & 0.2 & 0 \\
2-4 & 39.2 & 2001.0 & 1.9 & 11.2 \\
3-0 & 1.1 & 2.4 & 0.3 & 0 \\
3-1 & 3.9 & 2001.0 & 1.7 & 5.6 \\
3-2 & 112.6 & 2015.1 & 0.8 & 4.9 \\
3-3 & 1.6 & 145.3 & 0.2 & 0 \\
3-4 & 33.1 & 2000.5 & 1.2 & 59.9 \\
\hline
mean & 31.0 & 834.6 & 1.0 & 5.9 \\
\hline 
\end{tabular}
\caption{Results i640 class} 
\label{table:resi640} 
\end{table}

\begin{table}
\footnotesize\begin{tabular}{l........}
\hline                       
Name &\multicolumn{2}{c}{MS(-5)} & \multicolumn{2}{c}{MS(-3)} &
\multicolumn{2}{c}{LNDRC} & \multicolumn{2}{c}{DHEA}\\
& \multicolumn{1}{c}{gap(\%)} & \multicolumn{1}{c}{time(s)} &
\multicolumn{1}{c}{gap(\%)} & \multicolumn{1}{c}{time(s)} &
\multicolumn{1}{c}{gap(\%)} & \multicolumn{1}{c}{time(s)} &
\multicolumn{1}{c}{gap(\%)} & \multicolumn{1}{c}{time(s)}\\
\hline                  
6p & 2.2 & 3.5 & 2.6 & 0.6 & 4.2 & 0.5 & 2.2 & 21.3 \\
6u & 1.5 & 6.4 & 4.3 & 0.7 & 4.3 & 0.5 & 1.5 & 0.4 \\
7p & 2.3 & 90.2 & 3.9 & 1.7 & 7.7 & 1.5 & 2.3 & 6000.3 \\
7u & 2.2 & 134.1 & 2.2 & 1.8 & 3.6 & 1.2 & 2.2 & 596.4 \\
8p & 2.4 & 255.5 & 3.4 & 3.8 & 7.1 & 5.2 & 2.3 & 6004.2 \\
8u & 1.8 & 351.1 & 3.3 & 4.9 & 7.5 & 4.1 & 3.3 & 6000.9 \\
9p & 1.8 & 555.6 & 2.3 & 10.8 & 8.6& 16.1 & 22.1 & 6000.0 \\
9u & 1.9 & 775.8 & 3.3 & 11.1 & 6.2 & 13.1 & \multicolumn{1}{c}{Not Found} &
6000.4 \\
10p & 1.7 & 1761.9 & 1.7 & 28.0 & 10.4 & 114.4 & 31.3 & 6000.5 \\
10u & 2.7 & 2468.4 & 2.7 & 32.2 & 7.7 & 59.8 & \multicolumn{1}{c}{Not Found} &
6000.6 \\
11p & 1.5 & 972.3 &  1.6 & 49.3 & 11.6 & 630.0 & \multicolumn{1}{c}{Not Found} &
6003.1 \\
11u & 2.2 & 5632.8 & 2.6 & 71.9 & 9.0 & 360.6 &  \multicolumn{1}{c}{Not Found} &
6001.5 \\
12p & 1.5 & 4970.8 & 1.6 & 121.4 & 11.3 & 3507.7 & \multicolumn{1}{c}{Not
Found}&  6009.8 \\
12u & 2.0 & 4766.7 & 2.4 & 174.1 & 10.0 & 1915.7 & \multicolumn{1}{c}{Not
Found}& 6002.3 \\ \hline
mean & 2.0 & 1624.7 & 2.7 & 36.6 & 7.8 & 473.6 & - & 4760.1 \\
\hline 
\end{tabular}
\caption{Results H class} 
\label{table:resH} 
\end{table}
\section{Post-processing and optimality}

For this section we will assume unbounded depth $D$. Results are not easily
generalizable to the bounded-$D$ case. Results in this section apply to the
non-reinforced MS equations ($\gamma_t = 0$). The results here are based in 
construction of certain trees associated with the original graph and in the 
fact that MS/BP equations are always exact and have a unique solution on
trees\cite{mezard-montanari}.

\begin{defn}
Let $\left\{ \psi_{ij}\right\}$ be a MS fixed-point
(\ref{eq:ms1})-(\ref{eq:ms2}), and
let $\mathbf{d},\mathbf{p}$ be the decisional variables associated
with this fixed point, i.e.
$\left(d_{i}^{*},p_{i}^{*}\right)=\arg\max\psi_{i}\left(d_{i},p_{i}\right)$
for the physical field $\psi_{i}$ from (\ref{eq:marginal}). We will assume this
maximum to be non degenerate. We will employ the \emph{induced}
\emph{subgraph} $S^{*}$=$\left(V^{*},E^{*}\right)$ defined by $V^{*}=\left\{
i\in V:p_{i}^{*}\neq *\right\}\cup\{r\} $
and $E^{*}=\left\{ \left(i,p_{i}^{*}\right):i\in V,p_{i}^{*}\in V\right\} $.
The cost of this subgraph is
$\mathcal{H}\left(S^{*}\right)=\mathcal{H}\left(\mathbf{p}\right)=\sum_{i\in
V}c_{ip_{i}^{*}}$.
\end{defn}

The following optimality property of the MS-induced solution will be proven in the appendix. 
\begin{thm}
\label{thm:MS-exactness}Given a MS fixed point $\left\{ \psi_{ij}\right\} $
on $G$ (unbounded D) with induced subgraph $S^{*}=\left(V^{*},E^{*}\right)$ and
any subtree $S'=\left(V',E'\right)\subseteq G$ with $V'\subseteq V^{*}$,
then $\mathcal{H}\left(S^{*}\right)\leq\mathcal{H}\left(S'\right)$\end{thm}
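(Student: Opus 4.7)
My strategy is to reduce the theorem to the known exactness of max-sum on trees by constructing a suitable auxiliary tree $T$ on which the given fixed point lifts to a genuine MS fixed point.

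First, I would construct $T$ from $S^{*}=(V^{*},E^{*})$ by attaching, for every $i\in V^{*}$ and every $(ij)\in E\setminus E^{*}$, a ``phantom'' leaf $v_{ij}$ via an edge of cost $c_{ij}$, with a prescribed outgoing boundary MS-message equal to the cavity field $\phi_{j\to i}$ extracted from the given fixed point on $G$. Interior edges of $T$ and vertices in $V^{*}$ inherit their original costs and prizes. A direct check of (\ref{eq:ms1})--(\ref{eq:ms2}) then shows that the restriction of $\{\psi_{ij}\}$ together with the prescribed boundary fields is a MS fixed point on $T$: at every interior $i\in V^{*}$, the update equation coincides with the one on $G$ under the bijection between $T$-neighbors of $i$ and $\partial_{G}i$ induced by $v_{ij}\leftrightarrow j$.

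Next, since $T$ is a tree, MS is exact and its fixed point is unique, so the decisional configuration read off from the total fields (\ref{eq:marginal}) on $T$ is the global minimizer of the induced PCST energy on $T$. By the matching of messages and the non-degeneracy assumption, this minimizer restricts to $(d_{i}^{*},p_{i}^{*})$ for every $i\in V^{*}$ and forces $p_{v_{ij}}=*$ on every phantom leaf (using that $(ij)\notin E^{*}$ means $p_{i}^{*}\neq j$). Hence the optimum $T$-energy equals $\mathcal{H}(S^{*})$ plus an additive constant $K$ contributed by the inactive phantom leaves.

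For any comparison subtree $S'\subseteq G$ with $V'\subseteq V^{*}$, I would then lift $S'$ to an admissible configuration $\tilde{S}'$ on $T$ whose $T$-energy is at most $\mathcal{H}(S')+K$: edges of $S'$ lying in $E^{*}$ are inherited directly; chord-edges $(ij)\in E(S')\setminus E^{*}$ with both endpoints in $V^{*}$ are simulated by activating the appropriate phantom leaves on both sides, the boundary values $\phi_{j\to i}$, $\phi_{i\to j}$ controlling the cost through the constrained max in (\ref{eq:ms2}); vertices in $V^{*}\setminus V'$ receive $p=*$. Combining this upper bound with the global $T$-minimality of the lift of $S^{*}$ yields $\mathcal{H}(S^{*})+K\leq \mathcal{H}(S')+K$, which is the desired inequality on $G$.

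The main obstacle I expect is the chord-lifting step: a chord of $G[V^{*}]$ is absent from $T$, and proving that replacing its cost by the corresponding phantom-leaf boundary contributions does not lower the $T$-energy requires carefully unpacking the defining maximum for $\phi_{j\to i}$ and exploiting that both leaves $v_{ij}$ and $v_{ji}$ are present. The hypothesis $V'\subseteq V^{*}$ is what makes this bookkeeping go through, since it guarantees both endpoints (and therefore both phantom leaves) are available for every chord used by $S'$; this also explains why the theorem is stated under precisely that restriction.
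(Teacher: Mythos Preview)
Your overall strategy---lift the MS fixed point to an auxiliary tree on which max--sum is provably exact, then compare energies there---is the same as the paper's. The construction of the tree, however, is different, and that difference is where your plan breaks.

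The paper does not build $T$ from $S^{*}$ with phantom leaves. It works instead on the finite \emph{computation tree} $\mathcal{T}_{G}(N{+}1,r)$, the ball of radius $N{+}1$ in the universal cover of $G$, and lifts the fixed point there via Lemma~\ref{lemma:lifting}. This tree is large enough that \emph{every} edge of $G$ incident to a vertex of $V_{S''}\cong V^{*}$ appears as a genuine edge of $\mathcal{T}_{G}(N{+}1,r)$. Hence any $S'$ lifts verbatim: one sets $q_{i}=p'_{i}$ on the root copy of $V^{*}$ and $q_{i}=p^{*}_{i}$ elsewhere, with no need to simulate missing edges. The hypothesis $V'\subseteq V^{*}$ is used only to check admissibility of $\mathbf q$: whenever some node outside $V_{S''}$ points (via $p^{*}$) into $V_{S''}$, its target projects into $V^{*}$ and is therefore not assigned $*$ by the padding of $p'$.

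In your tree, by contrast, a chord $(i,j)\in E'\setminus E^{*}$ is genuinely absent, and the proposed ``simulation'' does not close the gap. If $p'_{j}=i$ along a chord, the only way to route $j$ toward $i$ in $T$ is through the phantom leaf $v_{ji}$; but the boundary energy you attached to $v_{ji}$ is (up to sign) the cavity field $\psi_{i\to j}$, which encodes the entire $i$--side of the graph and has no reason to be controlled by $c_{ij}$ alone. If the chord is a bridge of $S'$, the situation is worse: the lift becomes a forest whose extra components must be anchored at phantom leaves acting as roots, each contributing an uncontrolled boundary term. So the inequality ``$T$-energy of $\tilde S'\le \mathcal{H}(S')+K$'' is not established (and your last paragraph has the direction reversed: you need the phantom contribution not to \emph{raise} the lifted energy). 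The computation-tree route avoids all of this precisely because no edge simulation is ever needed.
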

This result has an easy generalization to loopy subraphs:
\begin{cor}
With $S^*$ as in Theorem \ref{thm:MS-exactness}, given any connected subgraph $S'=\left(V',E'\right)\subseteq G$ with
$V'\subseteq V^{*}$, then
$\mathcal{H}\left(S^{*}\right)\leq\mathcal{H}\left(S'\right)$. \end{cor}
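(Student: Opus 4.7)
The plan is to reduce the corollary to Theorem \ref{thm:MS-exactness} by passing from the connected subgraph $S'$ to a spanning tree of $S'$, exploiting the positivity of the edge weights and the fact that the vertex contribution to $\mathcal{H}$ depends only on the vertex set.

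First, I would recall from the definition of $\mathcal{H}$ (via $\mathcal{H}(\mathbf p) = \sum_{i\in V} c_{ip_i}$ with $c_{i*} = \lambda b_i$) that, up to an additive constant independent of the chosen subgraph, the cost of a connected subgraph $S' = (V', E')$ reads $\sum_{e\in E'} c_e - \lambda \sum_{i\in V'} b_i$. In particular, the vertex-prize part of $\mathcal{H}(S')$ is determined entirely by $V'$.

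Next, I would pick an arbitrary spanning tree $T' = (V', E'_{T})$ of the connected subgraph $S'$, so that $E'_T \subseteq E'$ and the vertex set is unchanged. Since all edge costs $c_e$ are strictly positive (as assumed in the definition of the PCST instance) and the vertex contribution is unchanged, one immediately obtains
\begin{equation*}
\mathcal{H}(T') = \sum_{e\in E'_T} c_e - \lambda\sum_{i\in V'} b_i \;\leq\; \sum_{e\in E'} c_e - \lambda\sum_{i\in V'} b_i = \mathcal{H}(S').
\end{equation*}

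Finally, I would apply Theorem \ref{thm:MS-exactness} to the subtree $T'$: since $V(T') = V' \subseteq V^{*}$, the theorem yields $\mathcal{H}(S^{*}) \leq \mathcal{H}(T')$. Chaining the two inequalities gives $\mathcal{H}(S^{*}) \leq \mathcal{H}(T') \leq \mathcal{H}(S')$, which is exactly the claim. There is essentially no hard step here: the only thing to check is that a spanning tree of $S'$ exists (true because $S'$ is connected) and that removing edges cannot increase the cost, which follows from $c_e > 0$. Thus the corollary is a short consequence of the theorem together with the edge-cost positivity assumption already built into the problem statement.
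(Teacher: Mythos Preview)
Your proof is correct and follows the same approach as the paper, which simply says ``Apply Theorem \ref{thm:MS-exactness} to a spanning tree of $S'$.'' You have merely made explicit the (obvious) intermediate inequality $\mathcal{H}(T')\leq\mathcal{H}(S')$ coming from edge-cost positivity, which the paper leaves implicit.
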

\begin{proof}
Apply Theorem (\ref{thm:MS-exactness}) to a spanning tree of $S'$.
\end{proof}
This trivially implies the following result of optimality of the MS solution in
a particular case:
\begin{cor}
With $S^*=(V^*,E^*)$ as in Theorem \ref{thm:MS-exactness}, if $V^{*}=V$ then
$\mathcal{H}\left(S^{*}\right)=PCST(G)$
\end{cor}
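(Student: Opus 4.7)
The plan is to derive this corollary as an immediate consequence of the previous one. The previous corollary establishes that $\mathcal{H}(S^{*}) \leq \mathcal{H}(S')$ for every connected subgraph $S' = (V', E') \subseteq G$ satisfying $V' \subseteq V^{*}$. Under the hypothesis $V^{*} = V$, the side condition $V' \subseteq V^{*}$ is vacuous, so the inequality extends to \emph{every} connected subgraph of $G$. Taking the infimum over all connected subgraphs then yields $\mathcal{H}(S^{*}) \leq PCST(G)$.

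For the reverse inequality I would argue that $S^{*}$ is itself a competitor in the minimization defining $PCST(G)$. The decisional variables $(d_i^{*}, p_i^{*})$, being the non-degenerate argmaxes of the physical fields $\psi_{i}$ coming from an MS fixed point, form a feasible configuration: they jointly satisfy $g_{ij}(d_i^{*}, p_i^{*}, d_j^{*}, p_j^{*}) = 1$ for every edge $(ij)$, so the set $E^{*} = \{(i, p_i^{*}) : p_i^{*} \in V\}$ traces a tree rooted at $r$ whose vertex set is exactly $V^{*}$. In particular $S^{*}$ is a connected subgraph of $G$ with cost $\mathcal{H}(S^{*}) = \sum_{i \in V} c_{i p_i^{*}}$, which is a feasible value for the PCST objective, giving $PCST(G) \leq \mathcal{H}(S^{*})$.

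There is essentially no obstacle here: the work was done in Theorem \ref{thm:MS-exactness} and the preceding corollary. The only point requiring a line of justification is feasibility of $S^{*}$ as a connected subgraph (so that $S^{*}$ itself can be used to bound $PCST(G)$ from above by $\mathcal{H}(S^{*})$), which is built into the setup of the definition of $S^{*}$ together with the non-degeneracy assumption on the maximizers of $\psi_{i}$.
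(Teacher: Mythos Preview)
Your proposal is correct and follows exactly the route the paper intends: the paper does not even write out a proof for this corollary, merely remarking that it ``trivially'' follows from the preceding corollary (your first paragraph), with the reverse inequality coming from $S^{*}$ itself being a feasible subtree. Your added line about feasibility of $S^{*}$ is more explicit than the paper, which takes this for granted from the definition of the induced subgraph and the tree structure established in the proof of Theorem~\ref{thm:MS-exactness}.
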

In \cite{Johnson2000}, the MGW algorithm includes two additional methods
to obtain a better PCST solution: StrongPrune and Minimum Spanning
Tree maintaining the same vertex set. Both methods give a substantial
improvement boost to the MGW candidate computed in the first phase.
A natural question may arise, does any of these two methods may help
to improve the solution of MS? The answer is negative in both
cases, and it is a trivial consequence of Theorem (\ref{thm:MS-exactness}).

\begin{cor}
$MST\left(V^{*},E\cap\left(V^{*}\times
V^{*}\right)\right)=\mathcal{H}\left(S^{*}\right)$\end{cor}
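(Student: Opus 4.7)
The plan is to view both sides of the claimed equality as PCST-costs $\mathcal{H}$ of two subgraphs sharing the same vertex set $V^{*}$, and to sandwich one between upper and lower bounds coming from Theorem \ref{thm:MS-exactness} and the defining property of the minimum spanning tree. I read $MST\!\left(V^{*},E\cap(V^{*}\times V^{*})\right)$ as the cost (in the $\mathcal{H}$ sense) of the subgraph obtained by replacing the edge set $E^{*}$ by a minimum spanning tree of the subgraph induced on $V^{*}$: since the vertex set is unchanged, the ``uncollected prize'' contribution $\sum_{i\notin V^{*}}\lambda b_{i}$ is identical for $S^{*}$ and for the MST-subgraph, so comparing $\mathcal{H}$ of the two is equivalent to comparing the sum of their edge costs.

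First I would verify the structural claim that $S^{*}$ is itself a spanning tree of the induced subgraph on $V^{*}$. The constraint $g_{ij}=1$ enforces $p_{i}^{*}=j\Rightarrow p_{j}^{*}\neq *$, so the parent of any node in $V^{*}\setminus\{r\}$ again lies in $V^{*}$; together with the strict decrease of the depths $d_{i}^{*}$ along parent pointers, this rules out cycles and shows $S^{*}$ is a tree on vertex set $V^{*}$ rooted at $r$, using only edges from $E\cap(V^{*}\times V^{*})$.

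Next I would set $S'$ to be the subgraph $(V^{*},T)$ where $T$ is any minimum spanning tree of $(V^{*},E\cap(V^{*}\times V^{*}))$. Because $V'=V^{*}\subseteq V^{*}$ and $S'$ is a subtree of $G$, Theorem \ref{thm:MS-exactness} applies and gives
\begin{equation}
\mathcal{H}(S^{*})\;\le\;\mathcal{H}(S').
\end{equation}
Conversely, by the previous paragraph $S^{*}$ is a spanning tree on $V^{*}$ built entirely from edges in $E\cap(V^{*}\times V^{*})$, so its edge-cost sum is at least that of the MST $T$; adding to both sides the common uncollected-prize contribution yields $\mathcal{H}(S')\le\mathcal{H}(S^{*})$. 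Combining the two inequalities gives the desired equality.

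The only place where anything could go wrong is the well-definedness of $S^{*}$ as a spanning tree on $V^{*}$: this is where the non-degeneracy assumption and the local constraints $g_{ij}$ must be used carefully to exclude cycles and disconnected components. Once that is in place, the proof is an immediate application of Theorem \ref{thm:MS-exactness} together with the extremal property of the MST, and no further calculation is needed.
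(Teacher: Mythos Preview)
Your proof is correct and follows essentially the same approach as the paper: one inequality from Theorem \ref{thm:MS-exactness} applied to the MST on $V^{*}$, the other from the extremal property of the MST together with the fact that $S^{*}$ is itself a spanning tree of $(V^{*},E\cap(V^{*}\times V^{*}))$. The paper's version is terser---it declares the converse ``trivially true due to the optimality of the MST''---whereas you explicitly verify that $S^{*}$ is a spanning tree of the induced subgraph on $V^{*}$, which is the content hidden behind that phrase.
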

\begin{proof}
The minimum spanning tree of $\left(V^{*},E\cap\left(V^{*}\times
V^{*}\right)\right)$
satisfies the hypothesis of Theorem (\ref{thm:MS-exactness}), so
$\mathcal{H}\left(S^{*}\right)\leq MST\left(V^{*},E\cap\left(V^{*}\times
V^{*}\right)\right)$.
The converse inequality is trivially true due to the optimality of
the MST. 
\end{proof}

\begin{cor}
$\mathcal{H}\left(\mbox{StrongPrune}\left(S^{*}\right)\right)=\mathcal{H}
\left(S^{*}\right)$\end{cor}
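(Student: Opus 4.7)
The plan is to deduce this corollary directly from Theorem \ref{thm:MS-exactness}, in perfect analogy with the MST corollary just proven. The key observation is that StrongPrune, as defined in \cite{Johnson2000}, takes as input a tree and iteratively removes subtrees whose aggregate prize is outweighed by the edge cost needed to keep them attached. Thus its output $S'=(V',E')$ is always a connected subtree of the input $S^*$, which in particular means $V'\subseteq V^*$ and $S'\subseteq G$.

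Given this structural fact, Theorem \ref{thm:MS-exactness} applies immediately to $S'$, yielding the inequality $\mathcal{H}(S^*)\leq\mathcal{H}(\text{StrongPrune}(S^*))$. For the reverse inequality, I would use the defining property of StrongPrune itself: it is a monotone procedure, in the sense that $\mathcal{H}(\text{StrongPrune}(T))\leq\mathcal{H}(T)$ for every input tree $T$ (the trivial option of pruning nothing is always available, and a prune is only executed when it strictly decreases $\mathcal{H}$). Applying this to $T=S^*$ gives the converse bound, and combining the two yields the desired equality.

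There is essentially no technical obstacle here; the only point requiring a line of justification is the monotonicity of StrongPrune, which I would record by a short remark citing \cite{Johnson2000}. Conceptually the corollary is saying that MS already produces a solution that is a fixed point of StrongPrune, so the post-processing step of MGW cannot improve on it — exactly the same phenomenon that made the MST corollary automatic.
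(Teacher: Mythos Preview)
Your proposal is correct and matches the paper's own proof essentially line for line: both obtain $\mathcal{H}(S^{*})\leq\mathcal{H}(\mbox{StrongPrune}(S^{*}))$ from Theorem~\ref{thm:MS-exactness} via the inclusion $V(\mbox{StrongPrune}(S^{*}))\subseteq V^{*}$, and both invoke the monotonicity $\mathcal{H}(\mbox{StrongPrune}(F))\leq\mathcal{H}(F)$ from \cite{Johnson2000} for the reverse inequality.
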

\begin{proof}
This is a consequence of the fact that
$V\left(\mbox{StrongPrune}\left(S^{*}\right)\right)\subseteq
V\left(S^{*}\right)=V^{*}$
and thus Theorem (\ref{thm:MS-exactness}) applies, implying
$\mathcal{H}\left(S^{*}\right)\leq\mathcal{H}\left(\mbox{StrongPrune}\left(S^{*}
\right)\right)$. The opposite inequality
$\mathcal{H}\left(\mbox{StrongPrune}\left(F\right)\right)\leq\mathcal{H}
\left(F\right)$ was proved in \cite{Johnson2000}.
\end{proof}

\section{Discussion}

In this work we compared MSGSTEINER, an algorithm inspired in the Cavity Theory 
of Statistical Physics, with two state-of-the art algorithms for the 
Prize-Collecting Steiner Problem. The Cavity Theory is expected to give asymptotically exact results on many ensembles of random graphs, so we expected it to give better performance for large instances. The comparison was performed both on 
randomly-generated graphs and existing benchmarks. We observed that 
MSGSTEINER finds better costs in significantly smaller times for many of 
the instances analyzed, and that this difference in time and quality grew with the size of the instances and their solution. We find these results encouraging in views of future applications to problems in biology in which optimization of networks with millions of nodes may be necessary, in particular given the conceptual simplicity of the scheme behind MSGSTEINER (a simple fixed-point iteration).
Additionally, we showed some optimality properties of the Max-Sum (the equations behind MSGSTEINER) fixed points for the unbounded depth case: optimality in some limit cases, and optimality in the general case under the two forms of post-processing present in the MGW algorithm.

\section{Acknowledgements}

Work  supported by  EU Grants No. 267915 and  265496. Work partially supported by the GDRE 224 GREFI-MEFI-CNRS-INdAM.

\appendix
\section{Post-processing and optimality proofs}
Before tackling the proof of the Theorem \ref{thm:MS-exactness}, we will need the following definitions and a technical result.
\begin{defn}
(Computation tree) The computation tree is a cover of the graph
$G$, in the following sense: it is an (infinite) tree $\mathcal{T}_{G}$
along with an application $\pi:\mathcal{T}_{G}\to G$ that satisfies
(\emph{a}) $\pi$ is suryective and (\emph{b}) $\pi_{|i\cup\partial
i}:i\cup\partial i\to\pi\left(i\cup\partial i\right)$
is a graph isomorphism for every $i\in\mathcal{T}_{G}$. It can be
explicitly constructed as the graph of non-backtracking paths in
$G$ starting on a given node $v_{0}$, with two paths being connected
iff the longest one is identical to the other except for an additional
final node (and edge). Up to graph isomorphisms, this tree does not depend 
on the choice $v_0$.

The (finite) tree $\mathcal{T}_{G}\left(t,v_{0}\right)$ is defined
by the radius $t$ ball centered around $v_{o}$ in $\mathcal{T}_{G}$.
Alternatively, it can be directly constructed as the graph of non-backtracking
paths of length $t$ starting on $v_{0}$, with two paths being connected
iff the longest one is identical to the other except for an additional
final node (and edge). Clearly the finite computation tree depends 
strongly on the choice of $v_0$

For both computation trees, edge weights (and node prizes) will be lifted
(transported) naturally as $c_{ij}=c_{\pi\left(i\right)\pi\left(j\right)}$.

Lifting edge constraints by $g_{ij}=g_{\pi(i)\pi(j)}$ defines a $(R,D)$-PCSF
problem with $R=\pi^{-1}(\{r\})$ on $\mathcal{T}_G$. On
$\mathcal{T}_{G}\left(t,v_{0}\right)$ instead, it gives a slightly relaxed
$(R,D)$-PCSF problem in which leaf nodes can point to neighbors in $G$ that are
not present in $\mathcal{T}_G$.
For convenience, let us extend $\pi$ by setting $\pi\left(*\right)=*$.

\end{defn}
\begin{rem}
As $\mathcal{T}_{G}(t,v)$ is a tree, the MS equations are exact
and have a unique fixed point in $\mathcal{T}_{G}\left(t,v\right)$\cite{mezard-montanari}.\end{rem}
\begin{lem}
\label{lem:lifting-messages}Any MS fixed point in a graph $G$ can
be naturally lifted to a MS fixed point in $\mathcal{T}_{G}$. Moreover,
any MS fixed point can be naturally lifted to a MS fixed point over
a slightly modified $\mathcal{T}_{G}(t,v)$ with extra cost terms only on leaves.\label{lemma:lifting}
\end{lem}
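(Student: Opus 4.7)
The plan rests on the fact that each MS update (\ref{eq:ms1})--(\ref{eq:ms2}) at a directed edge $(j,i)$ is a purely local computation involving only the costs $c_{jp}$ and the incoming cavity messages $\{\phi_{kj}\}_{k\in\partial j}$, together with the combinatorics of $\partial j\cup\{*\}$. Since $\pi$ is a graph isomorphism on each ball $j\cup\partial j$ and edge costs are lifted by $c_{ij}=c_{\pi(i)\pi(j)}$, this local data transports cleanly along $\pi^{-1}$, so the lift is essentially forced by the local-isomorphism condition (b) in the definition of $\mathcal{T}_G$.

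First, for the infinite cover $\mathcal{T}_G$, I would define the lifted messages by
\begin{equation}
\tilde\psi_{ji}(d,p) := \psi_{\pi(j)\pi(i)}(d,\pi(p)), \qquad \tilde\phi_{kj}(d,p) := \phi_{\pi(k)\pi(j)}(d,\pi(p)),
\end{equation}
where we use the induced bijection $\pi:\partial j\cup\{*\}\to\partial\pi(j)\cup\{*\}$ with the convention $\pi(*)=*$. Plugging these into the MS equations on $\mathcal{T}_G$ and reindexing the sum in (\ref{eq:ms1}) via the isomorphism $\pi_{|j\cup\partial j}$, the right-hand side becomes verbatim the right-hand side of the MS equation at $(\pi(j),\pi(i))$ in $G$, which holds by hypothesis. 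The additive normalizations $C_{ji}$ can be adjusted to match. This proves the first assertion.

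For $\mathcal{T}_G(t,v)$, I would apply the same lift and note that at every non-leaf node the previous verification is unchanged. At a leaf $l$ with parent $p$, however, one has $\partial l\setminus\{p\}=\emptyset$ in the finite tree, while $\partial\pi(l)\setminus\{\pi(p)\}$ is generally non-empty in $G$, so the MS equation picks up a deficit equal to $\sum_{k'\in\partial\pi(l)\setminus\{\pi(p)\}}\phi_{k'\pi(l)}(d,\pi(p_l))$. To absorb this deficit I would (a) use the relaxation already built into the definition, letting $p_l$ range over $\partial\pi(l)\cup\{*\}$, and (b) introduce a single extra additive cost term localized at the leaf
\begin{equation}
\tilde c_l(d_l,p_l) := -\sum_{k'\in\partial\pi(l)\setminus\{\pi(p)\}}\phi_{k'\pi(l)}(d_l,\pi(p_l)).
\end{equation}
With this leaf-only modification the MS equation from $l$ to $p$ balances and the lifted fields form a MS fixed point on the modified $\mathcal{T}_G(t,v)$.

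The main subtlety is verifying that the leaf correction $\tilde c_l$ is genuinely a function of the leaf's own variables $(d_l,p_l)$ only, so that it can honestly be encoded as an extra cost term without introducing new couplings or global terms. This is guaranteed because, by (\ref{eq:bp}), $\phi_{k'\pi(l)}$ is by construction a function of $(d_{\pi(l)},p_{\pi(l)})$ alone, and these pull back to $(d_l,p_l)$ via $\pi$. Modulo this observation, both assertions reduce to term-by-term matching of the MS right-hand sides, so no further technical work is needed.
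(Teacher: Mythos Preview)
Your proposal is correct and follows essentially the same route as the paper: lift messages via $\pi$, use the local-isomorphism property to verify the MS equations at all interior vertices, and then compensate at leaves by adding a local energy term. The only cosmetic difference is that the paper absorbs the entire outgoing message $\psi_{\pi(l)\pi(p)}(d_l,\pi(p_l))$ into the leaf energy term, whereas you write the correction as the sum of the missing incoming $\phi$-messages; these two choices differ only by the already-present edge cost $c_{lp_l}$ and a normalization constant, so they are equivalent, and your formulation is arguably the more transparent one.
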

\begin{proof}
As MS equations are local and the two graphs are locally isomorphic,
given a fixed point $\left\{ \psi_{ij}\right\} _{\left(i,j\right)\in E}$,
the messages $\Psi_{ij}=\psi_{\pi\left(i\right)\pi\left(j\right)}$
satisfy the fixed point equations on $\mathcal{T}_{G}$. On
$\mathcal{T}_{G}(t,v)$
the MS equations are satisfied everywhere except possibly on leaf
nodes (where the graphs are not locally isomorphic). Given a leaf
$i$ attached with edge $\left(i,j\right)$, add an energy term
$-E_{i}\left(d_{i}p_{i}\right)=\psi_{\pi\left(i\right)\pi\left(j\right)}\left(d_
{i},\pi\left(p_{i}\right)\right)$.
Now MS equations are satisfied everywhere on for this modified cost function.
\end{proof}

Now we proceed to prove Theorem \ref{thm:MS-exactness}
\subsection{Proof of Theorem \ref{thm:MS-exactness}}
\begin{proof}
Assume $S'$ oriented towards the root node $r$, i.e. defining a
parenthood vector $\left(p_{i}'\right)_{i\in V'}$, such that
$E'=\{(i,p_{i}^{'}):i\in V'\setminus\left\{ r\right\} \}$.
Consider the subgraph $S=\left(V_{S},E_{S}\right)$ of
$\mathcal{T}_{G}(N+1,r)$
induced by $S^{*}$, i.e. defined by $V_{S}=\left\{ v:\pi\left(v\right)\in
V^{*}\right\} $,
$E_{S}=\{\left(i,j\right):\left(\pi\left(i\right),\pi\left(j\right)\right)\in
E^{*}\}$.

It can be easily proven that the connected component in $S$ of the
root node of $\mathcal{T}_{G}(N+1,r)$ is a tree $S''$ isomorfic
to $S^{*}$ (see \cite{BBZ2008b}). Denote by $\left\{ p^{*}\right\} $ the
decisional variables
induced by $S^{*}$ and by $\left\{ p'\right\} $ the ones induced
by $S'$. The parenthood assignment 
\[
q_{i}=\begin{cases}
p'_{i} & i\in V_{S''}\\
p_{i}^{*} & i\notin V_{S''}
\end{cases}
\]
satisfies $q_{i}\neq*$ if $q_{j}=i$ (as $V'\subseteq V^{*}$) and
so depths $d_i$ can be assigned so as to verify all $g_{ij}$ constraints
in $\mathcal{T}_{G}(N+1,r)$. Now the cost associated with $\mathbf{q}$ is
$\mathcal{H}\left(\mathbf{q}\right)=\sum_{i\in V_{S''}}c_{ip'_{i}}+\sum_{i\notin
V_{S''}}c_{ip^*_{i}}\geq\sum_{i\in\mathcal{T}_{G}(N+1,r)}c_{ip_{i}^{*}}=\sum_{
i\in V_{S''}}c_{ip_{i}^{*}}+\sum_{i\notin V_{S''}}c_{ip{}_{i}^{*}}$
due to the optimality of the MS solution $\mathbf{p}^{*}$ in the computation
tree (this is because MS is always exact on a tree). This implies clearly that
$\mathcal{H}\left(S^{*}\right)\leq\mathcal{H}\left(S'\right)$. \end{proof}

\bibliographystyle{splncs}

\end{document}